\newcommand{\be}{\begin{equation}}
\newcommand{\ee}{\end{equation}}
\newcommand{\R}{{\mathbb{R}}}
\newcommand{\N}{{\mathbb{N}}}
\newcommand{\E}{{\mathbb{E}}}
\newcommand{\e}{{\rm{e}}}
\newcommand{\mkappa}{m^\kappa}
\newcommand{\Vcm}{Z_t}
\newcommand{\Ukappa}{X^\kappa_t}
\newcommand{\Ykappa}{Y^\kappa_t}
\newcommand{\taukappa}{{\tau}^\kappa}
\newcommand{\gammakappa}{{\gamma}^\kappa}
\newcommand{\taueff}{{\tau}_{\hspace{-0.07cm}\rm{eff}}}
\newcommand{\Dkappa}{\sigma^\kappa}
\newcommand{\bz}{\mathbf{z}}
\newcommand{\p}{\mathcal{P}}
\journal{XXX}
\begin{document}

\begin{frontmatter}



\title{Centre-of-mass like superposition of Ornstein--Uhlenbeck processes:
a pathway\\ to non-autonomous stochastic differential equations\\ 
and to fractional diffusion} 

\author[lasapienza]{Mirko D'Ovidio}
\author[unibo]{Silvia Vitali}
\author[upotsdam,bcam]{Vittoria Sposini}
\author[bcam]{Oleksii~Sliusarenko}
\author[bcam,isti]{Paolo Paradisi}
\author[unibo]{Gastone Castellani}
\author[bcam,iker]{Gianni Pagnini\corref{cor}}
\cortext[cor]{Corresponding author}
\ead{gpagnini@bcamath.org}

\address[unibo]{DIFA - Department of Physics and Astronomy, 
University of Bologna, Viale B. Pichat 6/2, I-40127 Bologna, Italy}
\address[lasapienza]{Department of Basic and Applied Science for Engineering, 
Sapienza University of Rome, Via Antonio Scarpa 16, I-00161 Rome, Italy}
\address[bcam]{BCAM -- Basque Center for Applied Mathematics, 
Alameda de Mazarredo 14, E-48009 Bilbao, Basque Country, Spain}
\address[iker]{Ikerbasque -- Basque Foundation for Science, 
Calle de Mar\'ia D\'iaz de Haro 3, E-48013 Bilbao,  Basque Country, Spain}
\address[isti]{ISTI-CNR, Institute of Information Science and Technology "A. Faedo",
via Moruzzi 1, I-56124, Pisa, Italy}
\address[upotsdam]{Institute for Physics and Astronomy, University of Potsdam,
Karl-Liebknecht-Strasse 24/25, D-14476 Potsdam-Golm, Germany}

\begin{abstract}
We consider an ensemble of Ornstein--Uhlenbeck processes featuring 
a population of relaxation times and a population of noise amplitudes 
that characterize the heterogeneity of the ensemble. 
We show that the centre-of-mass like variable corresponding to this ensemble 
is statistically equivalent to a process driven by a 
non-autonomous stochastic differential equation 
with time-dependent drift and a white noise. 
In particular, the time scaling and the density function of such variable 
are driven by the population of timescales and of noise amplitudes, respectively. 
Moreover, we show that this variable is equivalent in distribution 
to a randomly-scaled Gaussian process,
i.e., a process built by the product of a Gaussian process times 
a non-negative independent random variable. 
This last result establishes a connection with the so-called 
generalized gray Brownian motion and suggests application 
to model fractional anomalous diffusion in biological systems.
\end{abstract}

\begin{keyword}

Ornstein--Uhlenbeck process \sep
Heterogeneous ensemble \sep
Superposition \sep
Center of mass \sep
Non-autonomous stochastic differential equation \sep
Randomly-scaled Gaussian process \sep
Generalized grey Brownian motion \sep
Anomalous diffusion \sep
Fractional diffusion. 


\end{keyword}

\end{frontmatter}



\section{Introduction}
\label{intro}
Let $W^\kappa_t$, $\kappa \in \{1,2, \ldots, N\}$ be a sequence of Wiener processes 
$W_t^\kappa: [0,\infty) \to \R$ with $\E_x [W_t^\kappa]=0$ and 
$\E_x [W_t^\kappa W_s^\kappa]=\min(t,s)$. 
As usual, we denote by $\E_x$ the expectation with respect to $\mathbb{P}_x$ where $x$ 
is the starting point for the process under consideration.

In this paper we consider the ensemble $\{\Ukappa\}_{\kappa}$ 
of the Ornstein--Uhlenbeck (OU) processes $\Ukappa: [0,\infty)  \to \R$ 
satisfying the stochastic differential equations (SDE)
\be
d\Ukappa = - \frac{\Ukappa}{\taukappa} dt + \sqrt{2 \, \Dkappa} \, dW_t^\kappa \,, 
\quad X^\kappa_0=x \,,
\label{OU}
\ee
where, $\forall\, \kappa$, $\taukappa > 0$ and $\Dkappa >0$. 
We call $\taukappa$ relaxation times and $\Dkappa$ noise amplitudes. 
In particular, the following equivalence in law holds true
\be
\Ukappa = e^{-\frac{t}{\taukappa}} ( x +  W_{\varphi(t)}^\kappa), 
\quad \kappa =1,2, \ldots, N \,,
\ee
where
\be 
\varphi(t) = \Dkappa \taukappa \left(e^{2t / \taukappa} -1 \right) \,.
\ee
Hence, each $\Ukappa$ is a real-valued Gaussian process 
with $\E_x (\Ukappa)=x e^{-t /\taukappa}$.  
The correlation function
\be
\E_x [X_t^\kappa X_s^\kappa] = e^{-(t+s)/\taukappa} \min(\varphi(t), \varphi(s))
\label{correlationOU}
\ee
gives the covariance function
\be 
C(t,s) = \Dkappa \taukappa \,  
\e^{-|t-s|/\taukappa} - (\Dkappa \taukappa  + x^2) \e^{-(t+s)/\taukappa} \,.
\ee
Without loss of generality, we assume that $x=0$ and therefore
\be 
C(t,s) = \Dkappa \taukappa \, \left( \e^{-|t-s|/\taukappa} - \e^{-(t+s)/\taukappa} \right) \,.
\label{covarianceOU}
\ee
For a given $\kappa$, the infinitesimal generator $A$ of $\Ukappa$ is therefore
\be
Au = - \frac{x}{\taukappa} \frac{du}{dx} + \Dkappa \frac{d^2 u}{d x^2}, 
\quad u \in C_c(\mathbb{R}, \mathbb{R}) \,,
\ee
where $C_c$ is the set of smooth and compactly supported functions on $\mathbb{R}$. 
In view of the applications in diffusion problems,
in analogy with the Langevin equation (for $\gamma, m>0$)
we introduce the mass-like variable $\mkappa > 0$ such that
\be
\frac{1}{\taukappa} = \frac{\gammakappa}{\mkappa} \,, \quad 
\sqrt{\Dkappa} = \frac{\sqrt{\sigma_0}}{\mkappa} \,,
\label{def}
\ee
where $\sigma_0> 0$ is a fixed parameter independent of $\kappa$,
and, $\forall\, \kappa$,
the pair of independent parameters $(\taukappa, \Dkappa)$ in the SDE (\ref{OU})
is now replaced by the pair $(\gammakappa, \mkappa)$.

In the following, we study the centre-of-mass like process
\be
\Vcm=\sum_\kappa \frac{\mkappa}{M} \Ukappa \,,
\label{MOD1}
\ee
with $M=\sum_\kappa \mkappa$.

We highlight that, because of the center-of-mass like formulation,
the process $\Vcm$ here considered differs
from the superposition of OU processes considered for example in
Refs. \cite{csorgo_etal-cjm-1990,lin-spa-1995}
or from the superposition of OU-type processes considered in Refs. 
\cite{leonenko_etal-s-2005,barndorffnielsen_etal-mcap-2005,grahovac_etal-jsp-2016}, 
and also the type of provided results is different.
In particular, in this paper we derive the non-autonomous SDE 
satisfied by a process equivalent in distribution to the process $\Vcm$ 
and more 
we show also that the process $\Vcm$ is equivalent in distribution to
a randomly-scaled Guassian process, i.e.,
a process built by
the product of a non-negative random variable times a Gaussian process. 
This last result can be understood in the framework of the so-called
generalized grey Brownian motion (ggBm), which can be recovered when the Gaussian
process is the fractional Brownian motion.
The ggBm is a generalization of the theory of the white noise analysis
by introducing non-Gaussian measures of Mittag--Leffler type 
\cite{mura-phd-2008,mura_etal-jpa-2008,mura_etal-itsf-2009,
grothaus_etal-jfa-2015,grothaus_etal-jfa-2016}.

Under the physical point of view, 
the considered system can be understood as a
{\it heterogeneous ensemble of Brownian particles},
i.e., a system composed 
of non-identical Brownian particles that differ in their density 
(mass devided by volume). 
The study of the centre of mass allows for estimating the average concentration and
the momentum of inertia of the ensemble by computing the mean 
and the mean square displacement, 
respectively. 
Such a system can be related to the so-called 
{\it anomalous diffusion} which refers to diffusion processes that, 
in opposition to Brownian motion,
do not show a Gaussian density function of particle displacements and 
neither a linear growth in time of the mean square displacement.
Anomalous diffusion is a widespread phenomenon 
\cite{anomaloustransport,klm} 
that requires specific statistical tools \cite{meroz_etal-phyrep-2015},
that is sometimes related to fractional diffusion \cite{metzler_etal-jpa-2004},
and that can be generated by  
the polydispersity when classical thermodynamics holds
\cite{gheorghiu_etal-pnas-2004},
i.e., with Gaussian noise as in the present case,
or by noises with long-range spatiotemporal correlations 
with even "anomalous" thermodynamics 
\cite{gheorghiu_etal-pnas-2004}, i.e., with non-Gaussian noise.
In this respect, 
the derived link between the process $\Vcm$ and the ggBm 
allows for perspective applications of the present results
to model fractional diffusion in biological systems 
in view of the promising application of the ggBm in these systems 
\cite{molina_etal-pre-2016,sposini_etal-njp-2018}.

In the next Section \ref{main}, the main theorems are stated and
their proofs are given in Section \ref{proofs}. 
In Section \ref{simulations}, 
the numerical simulations related to the main results of Section \ref{main} are shown. 
In Section \ref{conclusions} the conclusions are reported and the 
application to model anomalous diffusion in biological systems is discussed.

\section{Main results}
\label{main}
In this Section we present the two main results,
i.e., Theorem \ref{Th:nonautonomous} and Theorem \ref{Th:ggBm2}, that we obtain
for the process $\Vcm$ defined in (\ref{MOD1}).
The first concerns the determination of the non-autonomous SDE satisfied by 
a process $\Vcm^H$ that is equivalent in distribution to $\Vcm$,
and the second concerns the equivalence in distribution of the process $\Vcm$
with a randomly-scaled Gaussian process,
i.e., a process defined by the product of a non-negative random variable and
a Gaussian process. 

We introduce  the scaled process 
\be
\Ykappa=\frac{\Ukappa}{\sqrt{\Dkappa}} \,,
\label{YOU}
\ee 
for which
\be
\label{PYp}
\mathbb{P}_0(\Ykappa \in dy) = \int_{\Omega_\tau} p^\kappa(y;t|\tau) q(d\tau) dy
\,,
\ee
where the Gaussian density $p^\kappa$ is conditioned to $\taukappa$ and 
\be
\mathbb{P}(\tau^\kappa \in \Omega_\tau) = \int_{\Omega_\tau} q(d\tau) =1.
\ee
We assume that $q(d\tau)= q(\tau) d\tau$, that is $\taukappa$ has population density $q : \mathbb{R}^+ \supseteq \Omega_\tau \mapsto \mathbb{R}^+$.
We notice that $\taukappa \stackrel{d}{=} \tau$, $\forall\, \kappa$, 
where $\tau$ is distributed according with $q$. 
We denote by \lq\lq$\stackrel{d}{=}$" the equality in distribution for random variables. 

We maintain the superscript $\kappa$ in \eqref{PYp} although the sequence 
$\Ukappa$ (and therefore, the sequence \eqref{YOU}) are identically distributed. 
Indeed, we have that $\Ukappa \stackrel{d}{=} X_t$ 
and $p^\kappa(y;t|\tau) = \sqrt{\sigma^\kappa} v(\sqrt{\sigma^\kappa} y; t | \tau)$ 
where $v(x;t| \tau) dx = \mathbb{P}_0(X_t \in dx)$.

\newtheorem{theorem}{Theorem}
\begin{theorem}
\label{Th:nonautonomous}
Let $\Vcm:[0,\infty) \to \R$ be the center-of-mass like stochastic process in \eqref{MOD1}.
%
%
%
Then, $\Vcm \stackrel{d}{=} Z^*_t$ where $Z^*_t$ satisfies the following non-autonomous SDE
\be
d\Vcm^*= - \frac{\Vcm^*}{\taueff(t)}dt + \frac{\sqrt{2 \sigma_0}}{M} \, dW_t^{\rm{eff}} \,, 
\quad Z_0^*=z \,,
\label{langevintype}
\ee
with $\sqrt{\sigma_0}=\mkappa \sqrt{\Dkappa}$,
$dW_t^{\rm{eff}}=\sum_\kappa dW_t^\kappa$ such that 
$\E [(dW_t^{\rm{eff}})^2]=N \E [(dW_t^\kappa)^2]$,
and where $\tau_{\rm{eff}}(t)$ is defined as 
\be
\taueff(t)= 
\left[
\frac
{\displaystyle{\int_{-\infty}^{+\infty} \int_{\Omega_\tau}
y^2 p^\kappa(y;t|\tau) q(\tau) \, d\tau dy}}
{\displaystyle{\int_{-\infty}^{+\infty} \int_{\Omega_\tau} 
\frac{y^2}{\tau^2} p^\kappa(y;t|\tau) q(\tau) \, d\tau dy}}
\right]^{1/2} \,.
\ee
\end{theorem}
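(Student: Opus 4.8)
The plan is to reduce the claimed equivalence in law to an identity between moments, exploiting that both processes are centred and Gaussian: $\Vcm$ because it is a finite linear combination of the independent Gaussian processes $\Ukappa$, and the candidate $\Vcm^*$ because it solves the linear SDE \eqref{langevintype} with deterministic, time-dependent coefficients. First I would differentiate \eqref{MOD1} and insert the individual dynamics \eqref{OU}, obtaining
\be
d\Vcm = -\sum_\kappa \frac{\mkappa}{M}\,\frac{\Ukappa}{\taukappa}\,dt
+ \sum_\kappa \frac{\mkappa}{M}\sqrt{2\,\Dkappa}\;dW_t^\kappa \,.
\ee
The decisive simplification occurs in the martingale part: by \eqref{def} the product $\mkappa\sqrt{\Dkappa}=\sqrt{\sigma_0}$ does not depend on $\kappa$, so the diffusion term collapses exactly to $\tfrac{\sqrt{2\sigma_0}}{M}\sum_\kappa dW_t^\kappa=\tfrac{\sqrt{2\sigma_0}}{M}\,dW_t^{\rm{eff}}$, whose quadratic variation is $\tfrac{2\sigma_0}{M^2}\,\E[(dW_t^{\rm{eff}})^2]=\tfrac{2\sigma_0 N}{M^2}\,dt$. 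This reproduces the noise term of \eqref{langevintype} with no approximation.

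The genuine difficulty lies in the drift, since $-\sum_\kappa \tfrac{\mkappa}{M}\,\Ukappa/\taukappa$ is not a deterministic multiple of $\Vcm=\sum_\kappa \tfrac{\mkappa}{M}\Ukappa$: the relaxation weights $1/\taukappa$ prevent the restoring force from closing on $\Vcm$ pathwise, so the reduction to a single effective coefficient can hold only in distribution. I would therefore define $\taueff(t)$ by imposing that the effective force $-\Vcm/\taueff(t)$ reproduce the true drift in mean square, namely $\E[(\Vcm/\taueff)^2]=\E[(\sum_\kappa \tfrac{\mkappa}{M}\,\Ukappa/\taukappa)^2]$. Both sides are evaluated using independence of the $\Ukappa$, the rescaling $\Ukappa=\sqrt{\Dkappa}\,\Ykappa$ of \eqref{YOU}, and the identity $(\mkappa)^2\Dkappa=\sigma_0$ from \eqref{def}, which reduce every summand to $\tfrac{\sigma_0}{M^2}\,\E[(\Ykappa)^2]$ and $\tfrac{\sigma_0}{M^2}\,(\taukappa)^{-2}\,\E[(\Ykappa)^2]$, respectively. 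Replacing $\sum_\kappa$ by the population average over $q$ and writing the conditional second moment as $\int y^2 p^\kappa(y;t|\tau)\,dy$ turns the mean-square balance directly into the stated ratio defining $\taueff^2(t)$.

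Finally I would verify that $\Vcm^*$ constructed from \eqref{langevintype} is equivalent in law to $\Vcm$: since both are centred and Gaussian, it is enough to compare their second moments, using the conditional second moment $\int y^2 p^\kappa(y;t|\tau)\,dy=\tau\,(1-\e^{-2t/\tau})$ obtained from \eqref{covarianceOU} after the rescaling \eqref{YOU}, together with the matched noise intensity. The step I expect to be the main obstacle is precisely this last verification: because the heterogeneous, $\kappa$-dependent drift has been replaced by the single term $-\Vcm^*/\taueff(t)$ only in a mean-square sense, one must check that equality of the relevant moments is \emph{sufficient}, and not merely necessary, for equivalence in distribution, i.e. that the variance of the solution of \eqref{langevintype} driven by this $\taueff(t)$ indeed coincides with that of $\Vcm$ for every $t$. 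Controlling this moment evolution, and thereby legitimising the passage from the $N$-fold heterogeneous ensemble to the effective non-autonomous Langevin description, is the delicate point on which the whole argument rests.
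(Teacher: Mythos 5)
Your proposal is correct and takes essentially the same route as the paper: the noise term collapses via $\mkappa\sqrt{\Dkappa}=\sqrt{\sigma_0}$ exactly as there, and your mean-square matching condition for the drift is precisely what the paper implements by twice applying the rule for sums of independent (conditionally on $\taukappa$) Gaussian variables, producing the identical formula for $\taueff(t)$. The closing verification you flag as the delicate point is in fact not carried out in the paper either---its proof ends with the fixed-$t$ distributional replacement $-\frac{1}{M}\sum_\kappa \gammakappa\Ukappa \stackrel{d}{=} -\Vcm/\taueff(t)$---so your attempt reproduces the published argument in substance and, by naming that gap (and noting that unconditionally $\Vcm$ is only a Gaussian mixture), is if anything more candid about the level of rigour involved.
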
 

We observe that the SDE (\ref{langevintype}) is non-autonomous because of the dependence
on $t$ of the drift coefficient through the function $\taueff(t)$,
and that the noise amplitude $\sqrt{2 \sigma_0}/M$ is not a process but
a non-negative random variable. The opposite situation with a drift
independent of $t$ and the noise amplitude given in terms of a stochastic process has been
recently studied in Ref. \cite{benth_etal-spa-2018}. 
By setting 
$$
\Lambda=\sigma_0/M^2 \,,
$$ 
from (\ref{def}), we have that $\Lambda$ is an independent non-negative random variable.

\begin{theorem}
\label{Th:ggBm1} 
Let $\{\Ukappa\}_\kappa$ be the ensemble defined in (\ref{OU}) and (\ref{def}).
Let $\Vcm:[0,\infty) \to \R$ be the center-of-mass like process defined in (\ref{MOD1}).
Let $\Ykappa$ be the scaled process in \eqref{YOU}.
Denote by $B_t^H:[0,\infty) \to \R$ the Gaussian process 
$B_t^H \stackrel{d}{=} \sqrt{\E_0 [(\Ykappa)^2]} \, W_1$, $\forall\, \kappa$. 
Then we have that , $\forall\, t>0$, as $N\to \infty$,
\be
\frac{1}{\sqrt{N}} \Vcm \stackrel{d}{\to} \sqrt{\Lambda} \, B_t^H \,.
\label{ggBmMURA0}
\ee
\end{theorem}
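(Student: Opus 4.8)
The plan is to collapse the center-of-mass weights into an ordinary sum of independent, identically distributed summands and then to invoke the classical central limit theorem at each fixed $t$, keeping the random amplitude as a separate factor. First I would pass to the scaled processes $\Ykappa$ of \eqref{YOU}: since $\Ukappa=\sqrt{\Dkappa}\,\Ykappa$ and the defining relations \eqref{def} give $\mkappa\sqrt{\Dkappa}=\sqrt{\sigma_0}$ independently of $\kappa$, the weights simplify,
\be
\Vcm=\sum_\kappa \frac{\mkappa}{M}\Ukappa
=\frac{1}{M}\sum_\kappa \mkappa\sqrt{\Dkappa}\,\Ykappa
=\frac{\sqrt{\sigma_0}}{M}\sum_\kappa \Ykappa \,,
\ee
so that, with $\Lambda=\sigma_0/M^2$,
\be
\frac{1}{\sqrt{N}}\Vcm=\sqrt{\Lambda}\;\frac{1}{\sqrt{N}}\sum_\kappa \Ykappa \,.
\ee
This already exhibits the randomly-scaled Gaussian structure announced in \eqref{ggBmMURA0}: a random amplitude $\sqrt{\Lambda}$ multiplying a normalized sum of the $\Ykappa$.

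Next I would identify the limit of that normalized sum. Using the explicit representation $\Ukappa=e^{-t/\taukappa}W^\kappa_{\varphi(t)}$ together with the Brownian scaling property, one gets $\Ykappa\stackrel{d}{=}\sqrt{\taukappa(1-e^{-2t/\taukappa})}\,\xi^\kappa$ with $\xi^\kappa$ standard Gaussian; hence the $\Ykappa$ are centered, their law is governed by the relaxation-time population $\taukappa\stackrel{d}{=}\tau$ alone, and they are independent and identically distributed with finite variance $\E_0[(\Ykappa)^2]=\int_{\Omega_\tau}\tau(1-e^{-2t/\tau})\,q(\tau)\,d\tau$. The classical central limit theorem at fixed $t>0$ then gives $\frac{1}{\sqrt{N}}\sum_\kappa \Ykappa\stackrel{d}{\to}\mathcal{N}\!\big(0,\E_0[(\Ykappa)^2]\big)$, and this Gaussian limit is precisely the law of $B_t^H\stackrel{d}{=}\sqrt{\E_0[(\Ykappa)^2]}\,W_1$.

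Finally I would combine the two ingredients. Conditioning on the noise-amplitude population (equivalently on the masses $\mkappa$, hence on $\Lambda$), the factor $\sqrt{\Lambda}$ becomes deterministic and the conditional central limit theorem yields $\sqrt{\Lambda}\,\frac{1}{\sqrt{N}}\sum_\kappa \Ykappa\stackrel{d}{\to}\sqrt{\Lambda}\,B_t^H$; integrating the corresponding characteristic functions against the law of $\Lambda$ and passing to the limit by dominated convergence then produces \eqref{ggBmMURA0}. The main obstacle is exactly this last interchange of the random scaling with the limit: it requires that $\Lambda$, measurable with respect to the amplitude population $\{\Dkappa\}$, be independent of the fluctuation sum $\sum_\kappa \Ykappa$, whose law depends only on the relaxation-time population $\{\taukappa\}$ and the driving Wiener noise. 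Under the modeling assumption that the population of relaxation times and the population of noise amplitudes are independent, this separation is clean; otherwise one must argue conditionally on $\{\taukappa\}$ and verify a Lindeberg condition, because conditionally the summands stay independent but cease to be identically distributed.
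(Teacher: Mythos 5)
Your proposal follows essentially the same route as the paper's proof: you collapse the weights via $\mkappa\sqrt{\Dkappa}=\sqrt{\sigma_0}$ to write $\Vcm=\frac{\sqrt{\sigma_0}}{M}\sum_\kappa \Ykappa$, apply the classical CLT at fixed $t$ to the i.i.d.\ centered summands $\Ykappa$ with variance $\E_0[(\Ykappa)^2]=\int_{\Omega_\tau}\tau(1-e^{-2t/\tau})\,q(\tau)\,d\tau$, and then reattach the random factor $\sqrt{\Lambda}$, exactly as the paper does. Your final paragraph merely makes explicit (via conditioning on $\Lambda$, characteristic functions, and the independence of the amplitude and relaxation-time populations) a step the paper performs silently by ``multiplying both sides by $\sqrt{\sigma_0}/M$'', which is added rigor rather than a different approach.
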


\begin{theorem}
\label{Th:ggBm2} 
Let $\{\Ukappa\}_\kappa$ be the ensemble defined in (\ref{OU}) and (\ref{def}).
Let $\Vcm:[0,\infty) \to \R$ be the center-of-mass like process defined in (\ref{MOD1}).
Let $\Ykappa$ be the scaled process in \eqref{YOU}.
Denote by $B_t^H:[0,\infty) \to \R$ the Gaussian process 
$B_t^H \stackrel{d}{=} \sqrt{\E_0 [(\Ykappa)^2]} \, W_1$, $\forall\, \kappa$. 
Then we have that , $\forall\, t>0$, $\forall\, N>0$, 
\be
\Vcm \stackrel{d}{=} \sqrt{N\Lambda} \, B_t^H=\Vcm^H \,.
\label{ggBmMURA}
\ee
\end{theorem}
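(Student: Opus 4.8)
The plan is to collapse the mass-weighted sum into a single scaled process and then to recognise it as a randomly-scaled Gaussian by conditioning on the relaxation time. First I would use the definitions \eqref{def}, written as $\mkappa=\sqrt{\sigma_0}/\sqrt{\Dkappa}$, together with $\Ukappa=\sqrt{\Dkappa}\,\Ykappa$ from \eqref{YOU}, inside \eqref{MOD1}. The two factors $\sqrt{\Dkappa}$ cancel in each summand, so that
\[
\Vcm=\sum_\kappa\frac{\mkappa}{M}\Ukappa=\frac{\sqrt{\sigma_0}}{M}\sum_\kappa\Ykappa=\sqrt{\Lambda}\,\sum_\kappa\Ykappa,\qquad\Lambda=\frac{\sigma_0}{M^2}.
\]
In this form the heterogeneity of the masses is absorbed into the single non-negative prefactor $\sqrt{\Lambda}$, which depends on the ensemble only through the noise amplitudes $\{\Dkappa\}$ (equivalently the masses $\{\mkappa\}$), while the surviving sum is built from the relaxation time and the driving noises $W_t^\kappa$.

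The heart of the argument is to prove that $\sum_\kappa\Ykappa\stackrel{d}{=}\sqrt{N}\,B_t^H$ for every $N$; equivalently, that $N^{-1/2}\sum_\kappa\Ykappa\stackrel{d}{=}B_t^H\stackrel{d}{=}\Ykappa$. Here I would condition on the relaxation time $\tau$. Evaluating the covariance \eqref{covarianceOU} at $s=t$ and dividing by $\Dkappa$ gives the conditional second moment $\E_0[(\Ykappa)^2]=\tau(1-\e^{-2t/\tau})$, so that, given $\tau$, each $\Ykappa$ is a centred Gaussian with this variance and the $\Ykappa$ are conditionally independent because the $W_t^\kappa$ are. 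By the stability of the Gaussian family under normalised summation,
\[
\frac{1}{\sqrt{N}}\sum_\kappa\Ykappa\;\Big|\;\tau\;\stackrel{d}{=}\;\Ykappa\;\Big|\;\tau,
\]
since both sides are centred Gaussians with the same conditional variance $\tau(1-\e^{-2t/\tau})$, and the right-hand side is by definition the conditional law of $B_t^H$. As this holds for every value of $\tau$, integrating against the population density $q$ removes the conditioning and yields $N^{-1/2}\sum_\kappa\Ykappa\stackrel{d}{=}B_t^H$, hence $\sum_\kappa\Ykappa\stackrel{d}{=}\sqrt{N}\,B_t^H$ for all $N$ simultaneously.

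It then remains to factor out the scale. Since $\Lambda=\sigma_0/M^2$ is a function of the noise amplitudes alone, it is independent of the Gaussian factor $\sum_\kappa\Ykappa$ and of $B_t^H$; hence the pairs $(\Lambda,\sum_\kappa\Ykappa)$ and $(\Lambda,\sqrt{N}\,B_t^H)$ share the same joint law, and multiplying through by $\sqrt{\Lambda}$ gives
\[
\Vcm=\sqrt{\Lambda}\,\sum_\kappa\Ykappa\stackrel{d}{=}\sqrt{\Lambda}\,\sqrt{N}\,B_t^H=\sqrt{N\Lambda}\,B_t^H=\Vcm^H,
\]
which is the asserted identity.

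The step I expect to be delicate is the exactness in $N$: what makes the equality hold for every finite $N$, and not merely asymptotically as in Theorem~\ref{Th:ggBm1}, is precisely that once the relaxation time is fixed the summands are \emph{already} Gaussian, so that the $N^{-1/2}$ normalisation reproduces the single-particle law exactly rather than through a central-limit approximation. The points needing care are therefore the conditional-Gaussian bookkeeping---verifying that the conditional variance of the normalised sum coincides with that of a single $\Ykappa$, uniformly in $\kappa$---and the independence of the amplitude-driven prefactor $\Lambda$ from the relaxation-time-driven Gaussian part, which is what licenses pulling $\sqrt{\Lambda}$ through the equality in law.
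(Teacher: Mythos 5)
Your reduction $\Vcm=\sqrt{\Lambda}\,\sum_\kappa\Ykappa$ is correct and coincides with the paper's identity \eqref{ggBm}, and your independence bookkeeping for $\Lambda$ is consistent with the paper's assumption that $(\taukappa,\Dkappa)$ are independent parameters. The genuine gap is in your central step, where you condition on ``the relaxation time $\tau$'' as if the whole ensemble shared a single draw of $\tau$. In the paper the $\taukappa$ are independent and identically distributed across $\kappa$ (stated explicitly in the proof of the correlation Lemma), so the correct conditioning is on the vector $(\tau^1,\dots,\tau^N)$, under which $N^{-1/2}\sum_\kappa\Ykappa$ is centred Gaussian with the \emph{random} variance $\frac{1}{N}\sum_\kappa v(\tau^\kappa)$, where $v(\tau)=\tau\left(1-\e^{-2t/\tau}\right)$. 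Unconditioning gives the characteristic function $\left(\E\!\left[\e^{-\xi^2 v(\tau)/(2N)}\right]\right)^{N}$, which by Jensen's inequality exceeds $\e^{-\xi^2\E[v(\tau)]/2}$, the characteristic function of $B_t^H$, whenever $q$ is nondegenerate; the two agree only in the limit $N\to\infty$ (which is precisely the content of Theorem \ref{Th:ggBm1}) or when $q(\tau)=\delta(\tau-\tau_0)$. For the same reason your intermediate claim $B_t^H\stackrel{d}{=}\Ykappa$ is false: $B_t^H$ is Gaussian with the deterministic variance $\E_0[(\Ykappa)^2]=\E[v(\tau)]$, whereas $\Ykappa$ is a scale mixture of Gaussians with conditional variance $v(\tau)$; they share second moments, not laws. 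Your key display $\frac{1}{\sqrt{N}}\sum_\kappa\Ykappa\,\big|\,\tau\stackrel{d}{=}\Ykappa\,\big|\,\tau$ is therefore valid only in a different model, in which one $\tau$ is drawn once from $q$ and shared by all $N$ particles.

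Note also that this is not how the paper obtains exactness in $N$: its proof takes the distributional identification $N^{-1/2}\Vcm\stackrel{d}{\to}\sqrt{\Lambda}\,B_t^H$ from Theorem \ref{Th:ggBm1} and transfers it to finite $N$ by a change of variables in the Fourier transform, yielding $\E_0[\e^{i\xi Z_t}]=\E_0[\e^{i\xi\sqrt{N\Lambda}B_t^H}]$; the finite-$N$ equality is thus imposed through a rescaling of the limit law, not derived from any finite-$N$ Gaussian stability. So your instinct that the exactness in $N$ is the delicate point was right, but the conditional-Gaussian mechanism you propose to justify it does not hold under the paper's i.i.d.\ population of relaxation times: to repair the argument you would either have to follow the paper's rescaling of the Theorem \ref{Th:ggBm1} limit, or add the hypothesis of a common (single-draw) relaxation time for the ensemble.
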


We observe that since $\Lambda$ is a random variable that is
different for any realization of the process $B_t^H$ and 
it is independent of $t$,
the process $\Vcm^H$ is not the same process studied
in Ref. \cite{garet-spa-2000}.

\newtheorem{lemma}{Lemma}
\begin{lemma}
The stochastic process $\Vcm$ has a non-exponential correlation 
controlled by the population of $\taukappa$ distributed according to the density $q(\tau)$. 
The exponential correlation is recovered in the case $q(\tau)=\delta(\tau-\tau_0)$.
\end{lemma}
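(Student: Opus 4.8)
The plan is to compute the two-time covariance (correlation) function of $Z_t$ explicitly and to display it as a superposition of exponential OU correlations weighted by the population density $q$. First I would use the independence of the driving Wiener processes $W_t^\kappa$: the processes $X_t^\kappa$ are then mutually independent and, since $x=0$, centred, so all cross-covariances vanish and only the diagonal terms survive,
\[
\E_0[Z_t Z_s] = \sum_\kappa \frac{(m^\kappa)^2}{M^2}\,\E_0[X_t^\kappa X_s^\kappa] .
\]
Inserting the single-process covariance \eqref{covarianceOU} and invoking the defining relation \eqref{def}, which gives $(m^\kappa)^2 \sigma^\kappa = \sigma_0$, the mass-dependent prefactor collapses to the constant $\sigma_0$, leaving
\[
\E_0[Z_t Z_s] = \frac{\sigma_0}{M^2}\sum_\kappa \tau^\kappa\left(\e^{-|t-s|/\tau^\kappa}-\e^{-(t+s)/\tau^\kappa}\right).
\]

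Next I would pass from the discrete ensemble to the population average, replacing the sum over $\kappa$ by integration against the relaxation-time density $q$, since $\tau^\kappa \stackrel{d}{=}\tau$ with law $q$. Up to the overall normalisation this yields a correlation of the form
\[
C_Z(t,s) \propto \int_{\Omega_\tau} \tau\left(\e^{-|t-s|/\tau}-\e^{-(t+s)/\tau}\right) q(\tau)\,d\tau ,
\]
i.e.\ a continuous mixture of exponential correlations with distinct rates $1/\tau$ weighted by $q$. The crucial observation is that such a mixture is non-exponential whenever $q$ is non-degenerate: by the injectivity of the Laplace transform, a genuine superposition of distinct exponentials can reduce to a single exponential in $|t-s|$ only if the mixing measure is concentrated at one point.

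Finally I would specialise to $q(\tau)=\delta(\tau-\tau_0)$. The sifting property of the Dirac delta collapses the integral to the single term
\[
C_Z(t,s) \propto \tau_0\left(\e^{-|t-s|/\tau_0}-\e^{-(t+s)/\tau_0}\right),
\]
which is precisely the exponential covariance \eqref{covarianceOU} of a single OU process with relaxation time $\tau_0$, so the exponential correlation is recovered, as claimed. The main obstacle I anticipate is making the sum-to-integral passage precise while correctly handling the normalisation by $M$: one must be careful about whether the masses $m^\kappa$ (equivalently the noise amplitudes $\sigma^\kappa$) and the relaxation times $\tau^\kappa$ are distributed independently, so that the $\sigma$-dependence factors out cleanly through $(m^\kappa)^2\sigma^\kappa=\sigma_0$ and only the $q$-average over $\tau$ governs the $t,s$-dependence. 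The remaining, more conceptual point is the rigorous justification that a non-trivial mixture of exponentials is never itself exponential, for which the uniqueness of the Laplace transform is the natural tool.
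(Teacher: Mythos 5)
Your proposal is correct and follows essentially the same route as the paper's proof: independence of the driving noises kills the cross terms, the relation $(\mkappa)^2\Dkappa=\sigma_0$ factors the amplitude dependence into $\Lambda=\sigma_0/M^2$, and averaging the conditional exponential OU covariance over the i.i.d.\ $\taukappa$ yields the mixture $N\,\E[\Lambda]\int_{\Omega_\tau}\tau\bigl(\e^{-|t-s|/\tau}-\e^{-(t+s)/\tau}\bigr)q(d\tau)$, from which the $\delta$-case is immediate. Your Laplace-transform injectivity argument that a non-degenerate mixture cannot be exponential, and your explicit flagging of the independence needed between $\Lambda$ and the $\taukappa$ to factor the expectation, are small refinements of points the paper leaves implicit rather than a different approach.
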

\begin{proof}
Consider the scaled process
\be
\Ykappa = \frac{\Ukappa}{\sqrt{\Dkappa}} \,,
\ee
which satisfies the SDE
\be
d\Ykappa = - \frac{\Ykappa}{\taukappa} dt + \sqrt{2} \, dW^\kappa_t \,,
\label{eqOU0}
\ee
then it holds
\be
\Vcm=
\frac{\sqrt{\sigma_0}}{M} \sum_\kappa \Ykappa \,.
\label{ggBm}
\ee
Each stochastic process $\Ykappa$ is an OU process 
with exponential correlation (\ref{correlationOU}), i.e.,
$\displaystyle{\E [\Ykappa Y_s^\kappa | \tau^\kappa] = \tau^\kappa \, \left( \e^{-|t-s|/ \tau^\kappa} - \e^{-(t+s)/ \tau^\kappa} \right)}$, $\forall\, \kappa$, according to (\ref{eqOU0}).
The correlation of the process $\Vcm$ is 
\begin{eqnarray}
\E [\Vcm Z_s]
&=& \E \left[\frac{\sigma_0}{M^2} \, \E \left[\sum_\kappa \Ykappa \sum_\kappa Y_s^\kappa \bigg| \taukappa \right] \right] \nonumber \\ 
&=& \E \left[\frac{\sigma_0}{M^2}\, \sum_\kappa \E \left[\Ykappa Y_s^\kappa \big| \taukappa \right] \right] \nonumber \\ 
&=& \E \left[\frac{\sigma_0}{M^2} \sum_\kappa \taukappa \, \left( \e^{-|t-s|/ \tau^\kappa} - \e^{-(t+s)/ \tau^\kappa} \right) \right] \nonumber \\
&=& \E \left[\frac{\sigma_0}{M^2} \right]  \sum_\kappa \E \left[ \taukappa \, \left( \e^{-|t-s|/ \tau^\kappa} - \e^{-(t+s)/ \tau^\kappa} \right) \right] \nonumber \\
&=& N \, \E \left[\frac{\sigma_0}{M^2}\right] 
\int_{\Omega_\tau} \tau \, \left( \e^{-|t-s|/ \tau} - \e^{-(t+s)/ \tau} \right) \, q(d \tau),
\label{correlazione}
\end{eqnarray}
where, we recall that $\Lambda=\sigma_0/M^2$ is an independent random variable and $\taukappa
$ are independent and identically distributed. The distribution $q(\tau)$ modifies the classical correlation displayed, $\forall\, \kappa$, by $\E [\Ykappa Y_s^\kappa \big| \tau]$. When $q(\tau)=\delta(\tau-\tau_0)$, the exponential correlation follows.
\end{proof}

From Theorem \ref{Th:ggBm2}, we have that
\be
\E [\Vcm Z_s]  = N \E [\Lambda] \, \E [B_t^H B_s^H]= N \E [\Lambda] \, R(t,s) \,,
\ee
and then, after re-scaling $N\Lambda \to \Lambda$, the following holds

\begin{lemma}
Let $f(\lambda)$, $\lambda \in \Omega_\lambda \subseteq \R^+$,
be the density of the non-negative random variable $\Lambda$,
then the $n$-dimensional density of $\Vcm$ is
\be
\p(\bz;R)=\frac{1}{\sqrt{(2\pi\lambda)^n \, {\rm det} \, R}} 
\int_{\Omega_\lambda} \exp\left\{-\frac{1}{2\lambda} \, \bz^T R^{-1} \bz\right\} 
\, f(\lambda) \, d\lambda \,,
\label{ggBmpdfn}
\ee
where $\bz=(z_1,\dots,z_n)$ and $R=R(t_i,t_j)$, $i,j=1,2,\dots,n$, 
is the covariance matrix of $B_t^H$.
\end{lemma}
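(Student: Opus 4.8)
The plan is to read off the finite-dimensional densities of $\Vcm$ directly from its representation as a randomly-scaled Gaussian process. By Theorem~\ref{Th:ggBm2}, together with the rescaling $N\Lambda\to\Lambda$ performed just above, we have $\Vcm\stackrel{d}{=}\sqrt{\Lambda}\,B_t^H$, where $B_t^H$ is the centred Gaussian process with covariance $R(t,s)$ and $\Lambda$ is a non-negative random variable, independent of $B_t^H$, with density $f(\lambda)$. I would fix arbitrary times $t_1,\dots,t_n$ and determine the joint law of $(Z_{t_1},\dots,Z_{t_n})$ by conditioning on $\Lambda$; the essential structural feature to exploit is that a single realisation of $\sqrt{\Lambda}$ scales the Gaussian vector at all $n$ times simultaneously, since $\Lambda$ is one random variable and not a process.

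First I would condition on $\{\Lambda=\lambda\}$. By independence, the conditional law of $(Z_{t_1},\dots,Z_{t_n})$ is that of $\sqrt{\lambda}\,(B_{t_1}^H,\dots,B_{t_n}^H)$, a centred Gaussian vector whose covariance matrix is $\lambda R$, with $R=(R(t_i,t_j))_{i,j=1}^n$. Assuming $R$ is positive-definite, the conditional density is therefore the multivariate normal density with covariance $\lambda R$, namely
\be
\frac{1}{\sqrt{(2\pi)^n\,\det(\lambda R)}}\,
\exp\left\{-\tfrac{1}{2}\,\bz^T(\lambda R)^{-1}\bz\right\}
=\frac{1}{\sqrt{(2\pi\lambda)^n\,\det R}}\,
\exp\left\{-\frac{1}{2\lambda}\,\bz^T R^{-1}\bz\right\} \,,
\ee
where the rewriting uses $\det(\lambda R)=\lambda^n\det R$ and $(\lambda R)^{-1}=\lambda^{-1}R^{-1}$.

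Finally I would integrate out the conditioning by the law of total probability, averaging the conditional density against $f(\lambda)\,d\lambda$ over $\Omega_\lambda$; this produces exactly the scale-mixture formula~\eqref{ggBmpdfn}, in which the normalisation factor $[(2\pi\lambda)^n\det R]^{-1/2}$ belongs inside the integral as the $\lambda$-dependent prefactor of the Gaussian. The only delicate point is the invertibility of $R$: the density representation holds provided the sampling times $t_1,\dots,t_n$ are such that the Gaussian covariance matrix $R$ is non-degenerate, which is the implicit standing assumption. Outside of this, the argument is a routine tracking of the multivariate normalisation constant under the scalar rescaling $\sqrt{\lambda}$, so I anticipate no substantial obstacle.
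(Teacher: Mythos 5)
Your proof is correct and is precisely the argument the paper leaves implicit: the lemma is presented as an immediate consequence of Theorem \ref{Th:ggBm2} (after the rescaling $N\Lambda\to\Lambda$), and the intended justification is exactly your conditioning on $\{\Lambda=\lambda\}$, using independence of $\Lambda$ and $B_t^H$ to get the centred Gaussian density with covariance $\lambda R$, followed by averaging against $f(\lambda)\,d\lambda$. You also correctly observe that the $\lambda$-dependent prefactor $[(2\pi\lambda)^n\det R]^{-1/2}$ must sit inside the integral, which fixes a typographical slip in the paper's displayed formula \eqref{ggBmpdfn}.
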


\newtheorem{corollary}{Corollary}
\begin{corollary}
Let $R(t,t)=t^{2H}$,
with $0 < H < 1$, and $\Lambda \sim f(\lambda)$,
then the one-point one-time density function of $\Vcm$ is  
\be
\p(z;t) 
= \frac{1}{\sqrt{4 \pi \lambda \, t^{2H}}} \int_{\Omega_\lambda} 
\exp\left\{-\frac{z^2}{4 \lambda \, t^{2H}}\right\}
\, f(\lambda) \, d\lambda \,.
\ee
\end{corollary}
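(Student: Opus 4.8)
The plan is to derive this one-point one-time density purely as the $n=1$ specialization of the $n$-dimensional density $\p(\bz;R)$ established in the immediately preceding Lemma, so that the only genuine content is to evaluate that general formula when the covariance matrix degenerates to a single scalar entry and then to insert the prescribed diagonal behaviour $R(t,t)=t^{2H}$. First I would set $n=1$ in $\p(\bz;R)$: the vector $\bz=(z_1,\dots,z_n)$ collapses to the scalar $z$, the covariance matrix $R=R(t_i,t_j)$ collapses to the single entry $R(t,t)=\E[(B_t^H)^2]$, whence ${\rm det}\,R=R(t,t)$, $R^{-1}=1/R(t,t)$, and the quadratic form reduces to $\bz^T R^{-1}\bz=z^2/R(t,t)$. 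Inserting these into $\p(\bz;R)$ yields a one-dimensional randomly-scaled Gaussian density in which each value of $\lambda$ contributes a centred Gaussian whose variance is proportional to $\lambda\,R(t,t)$, weighted by $f(\lambda)$.

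Second I would impose the hypothesis $R(t,t)=t^{2H}$ with $0<H<1$, which is the admissible diagonal profile for the covariance of $B_t^H$. Replacing $R(t,t)$ by $t^{2H}$ in the specialized expression turns the exponent into a term proportional to $-z^2/(\lambda t^{2H})$ and the prefactor into one proportional to $(\lambda t^{2H})^{-1/2}$, which is exactly the claimed representation. I would close by verifying internal consistency: for every fixed $\lambda$ the inner factor is a genuine centred Gaussian density in $z$ and integrates to one, and since $\int_{\Omega_\lambda}f(\lambda)\,d\lambda=1$ the mixture $\p(z;t)$ is itself a normalized probability density, as it must be; in particular the $4\pi$ and $4\lambda$ appearing together are mutually consistent and guarantee $\int_\R \p(z;t)\,dz=1$.

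The one delicate point, and the step I would track most carefully, is the numerical constant in front, namely the appearance of $4\pi$ and $4\lambda$ rather than $2\pi$ and $2\lambda$. This factor of two is inherited from the noise amplitude $\sqrt{2}$ in the scaled Ornstein--Uhlenbeck equation $d\Ykappa=-\Ykappa/\taukappa\,dt+\sqrt{2}\,dW_t^\kappa$, which fixes the physical variance scale of $B_t^H$, and hence of $\sqrt{\Lambda}\,B_t^H$, at $2\lambda t^{2H}$; this is precisely the heat-kernel normalization that makes the displayed density coincide with the standard fundamental solution of a (fractional) diffusion equation. I would therefore carry this constant explicitly through the reduction from $\p(\bz;R)$ rather than treat it as cosmetic, since it is the sole place where passing from the $n$-dimensional Lemma to the Corollary is not a purely formal substitution.
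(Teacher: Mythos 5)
Your overall route---specialize the preceding Lemma to $n=1$, so that $\det R=R(t,t)$, $R^{-1}=1/R(t,t)$ and $\bz^T R^{-1}\bz=z^2/R(t,t)$, then insert $R(t,t)=t^{2H}$---is exactly the derivation the paper intends; it gives no separate proof, and the Corollary is meant as the one-dimensional case of the Lemma (with the $\lambda$-dependent prefactor understood to sit inside the mixture integral, as your normalization check implicitly assumes).

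The genuine gap is in the step you yourself single out as delicate, and your resolution of it does not work. Performed literally, the substitution yields
\be
\p(z;t)=\int_{\Omega_\lambda}\frac{1}{\sqrt{2\pi\lambda\,t^{2H}}}
\exp\left\{-\frac{z^2}{2\lambda\,t^{2H}}\right\}\,f(\lambda)\,d\lambda\,,
\ee
with $2$'s, not the $4$'s of the statement: the Lemma defines $R$ as the covariance matrix of $B_t^H$, so the hypothesis $R(t,t)=t^{2H}$ pins down $\E[(B_t^H)^2]=t^{2H}$ and the conditional variance of $\sqrt{\Lambda}\,B_t^H$ is $\lambda t^{2H}$, leaving no room for an extra factor of two. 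Your reconciliation asserts simultaneously that $R(t,t)=t^{2H}$ and that this conditional variance is $2\lambda t^{2H}$; these two claims are mutually contradictory, so "carrying the constant through the reduction" cannot produce the $4$'s. Nor can the factor be derived from the $\sqrt{2}$ in $d\Ykappa=-\Ykappa/\taukappa\,dt+\sqrt{2}\,dW_t^\kappa$: that amplitude gives $\E_0[(\Ykappa)^2]=\E[\tau(1-\e^{-2t/\tau})]$, which behaves like $2t$ only as $t\to 0$, i.e.\ it matches $2t^{2H}$ only for $H=1/2$ and small times; for general $q$ and general $H$ the power-law diagonal is an \emph{imposed hypothesis}, not a consequence of the OU ensemble, so the numerical constant cannot be fixed this way.

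What actually reconciles the Lemma with the Corollary is a tacit change of normalization in the paper itself: the Corollary (and the two subsequent corollaries leading to (\ref{stpdf}) and (\ref{fbmpdf}), which require the heat-kernel form $\frac{1}{\sqrt{4\pi t}}\,\e^{-z^2/(4t)}$ in the classical limit $\alpha=2$, $\beta=1$) adopts the convention in which the conditional variance is $2\lambda t^{2H}$---equivalently, one reads the hypothesis as $R(t,t)=2\,t^{2H}$, or absorbs a factor $2$ into $\lambda$ with $f$ rescaled accordingly, in the same spirit as the earlier rescaling $N\Lambda\to\Lambda$. A correct write-up should state this rescaling explicitly; as it stands, your proof claims the displayed $4$'s follow from the Lemma under $R(t,t)=t^{2H}$ as written, and that implication is false.
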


\newtheorem{remark}{Remark}
\begin{remark}
If $f(\lambda)$ is the density function of $\Lambda$, then from the relation
$\Lambda=\sigma_0/M^2$ we have that the density function of $M$ is 
$$
g(M)= \frac{2 \sigma_0}{M^2} \, f\!\left(\frac{\sigma_0}{M^2}\right) \,.
$$
\end{remark}

\begin{remark}
If $\mkappa$, $\kappa \in \N$, are independent identically distributed variables
according to the distribution $\rho(m)$,
then $M=\sum \mkappa$ follows the same distribution, i.e., 
$$
g(M) \equiv \rho(m) \,.
$$ 
\end{remark}

\begin{corollary}
[See Theorem 2.1 and Corollary 2.1 in Ref. \cite{pagnini_etal-fcaa-2016}]
Let $R(t,t)=t^{2\beta/\alpha}$ and 
$f(\lambda)=K^{-\alpha/2}_{\alpha/2,\beta}(\lambda)$, with 
\cite{mainardi_etal-fcaa-2001}
\be
K^{-\theta}_{\alpha,\beta}(\lambda)=
\frac{1}{2 \pi} \int_{-\infty}^{+\infty} \e^{- i \omega \lambda} 
\left\{
\frac{1}{2 \pi i} \int_{c-i\infty}^{c+i\infty} 
\e^{pt} \frac{p^{\beta-1}}{p^\beta + \Psi_\alpha^\theta(\omega)} \, dp 
\right\} \, d\omega \,,
\ee
where $0 < \beta < 1$, $0 < \alpha < 2$, $\theta=\min\{\alpha,2-\alpha\}$ and 
$\Psi_\alpha^\theta(\omega)=|\omega|^\alpha \e^{i ({\rm{sgn}} \omega)\theta \pi/2}$,
then the one-point one-time density function of $\Vcm$ is  
\cite{pagnini_etal-fcaa-2016} 
\begin{eqnarray}
\p(z;t) 
&=& \frac{1}{\sqrt{4 \pi \lambda \, t^{2\beta/\alpha}}}\int_0^{+\infty} 
\exp\left\{-\frac{z^2}{4 \lambda \, t^{2\beta/\alpha}}\right\} 
\, K^{-\alpha/2}_{\alpha/2,\beta}(\lambda) \, d\lambda 
\nonumber \\
&=& \frac{1}{t^{\beta/\alpha}} \, K^0_{\alpha,\beta}\left(\frac{|z|}{t^{\beta/\alpha}}\right) \,.
\label{stpdf}
\end{eqnarray}
\end{corollary}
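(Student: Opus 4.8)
The plan is to read the statement as a composition of two facts: the first equality is nothing but the preceding Corollary specialised to a particular self-similarity exponent and a particular mixing density, while the second equality is the subordination representation of the space--time fractional diffusion Green's function established in Ref.~\cite{pagnini_etal-fcaa-2016}. First I would set $H=\beta/\alpha$, so that $R(t,t)=t^{2H}=t^{2\beta/\alpha}$, and take the density of $\Lambda$ to be $f(\lambda)=K^{-\alpha/2}_{\alpha/2,\beta}(\lambda)$ on $\Omega_\lambda=(0,+\infty)$. Reading the preceding Corollary as an identity between the density $\p$ and a scaled-Gaussian mixture, substitution yields at once the first line of \eqref{stpdf}; this move is purely formal once one checks that $K^{-\alpha/2}_{\alpha/2,\beta}$ is a genuine probability density supported on $(0,+\infty)$ in the range $0<\beta<1$, $0<\alpha<2$, $\theta=\min\{\alpha,2-\alpha\}$, which is the property invoked from Refs.~\cite{pagnini_etal-fcaa-2016,mainardi_etal-fcaa-2001}.

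For the second equality I would first strip the time dependence by self-similarity. Writing $\xi=z/t^{\beta/\alpha}$, the change of variables gives $z^2/(4\lambda t^{2\beta/\alpha})=\xi^2/(4\lambda)$ and $(4\pi\lambda t^{2\beta/\alpha})^{-1/2}=t^{-\beta/\alpha}(4\pi\lambda)^{-1/2}$, so the left-hand side equals $t^{-\beta/\alpha}$ times a function of $|\xi|$ alone. Hence it suffices to prove the reduced identity
\be
\int_0^{+\infty}\frac{1}{\sqrt{4\pi\lambda}}\exp\left\{-\frac{\xi^2}{4\lambda}\right\}K^{-\alpha/2}_{\alpha/2,\beta}(\lambda)\,d\lambda=K^0_{\alpha,\beta}(|\xi|)\,.
\ee
I would verify this by matching Fourier transforms in $\xi$. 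The Gaussian kernel $(4\pi\lambda)^{-1/2}\e^{-\xi^2/(4\lambda)}$ has Fourier transform $\e^{-\lambda\omega^2}$, so the transform of the left-hand side is the Laplace transform of $K^{-\alpha/2}_{\alpha/2,\beta}$ evaluated at $s=\omega^2$. On the other side, the definition of $K^0_{\alpha,\beta}$ with $\theta=0$ and $\Psi_\alpha^0(\omega)=|\omega|^\alpha$ gives, through the Bromwich inversion $\frac{1}{2\pi i}\int_{c-i\infty}^{c+i\infty}\e^{p}\,p^{\beta-1}/(p^\beta+|\omega|^\alpha)\,dp=E_\beta(-|\omega|^\alpha)$ at unit time, the Fourier transform $E_\beta(-|\omega|^\alpha)$, where $E_\beta$ denotes the Mittag--Leffler function.

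Matching the two transforms reduces the whole claim to the single Laplace-transform identity
\be
\int_0^{+\infty}\e^{-s\lambda}K^{-\alpha/2}_{\alpha/2,\beta}(\lambda)\,d\lambda=E_\beta(-s^{\alpha/2})\,,\qquad s>0\,,
\ee
since then $s=\omega^2$ returns $E_\beta(-|\omega|^\alpha)$, and the injectivity of the Fourier transform on densities closes the argument. This identity is the heart of the matter and is precisely Theorem~2.1 and Corollary~2.1 of Ref.~\cite{pagnini_etal-fcaa-2016}: it rests on the fact that the extremal skewness $\theta=\alpha/2$ built into $K^{-\alpha/2}_{\alpha/2,\beta}$ renders this function a one-sided density whose characteristic function $E_\beta(-\Psi^{\alpha/2}_{\alpha/2}(\omega))$ analytically continues, under $i\omega\mapsto -s$, to $E_\beta(-s^{\alpha/2})$. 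The main obstacle is therefore not the assembly carried out here but this extremal-skewness continuation, which is exactly why the statement is attributed to, and the identity borrowed from, Ref.~\cite{pagnini_etal-fcaa-2016}.
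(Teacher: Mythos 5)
Your proposal is correct and follows essentially the route the paper intends: the first equality is the preceding corollary specialised to $H=\beta/\alpha$ and $f(\lambda)=K^{-\alpha/2}_{\alpha/2,\beta}(\lambda)$, and the second equality is exactly the subordination identity of Theorem~2.1/Corollary~2.1 of Ref.~\cite{pagnini_etal-fcaa-2016}, which the paper cites rather than reproves. Your explicit Fourier--Laplace matching, reducing everything to $\int_0^{+\infty}\e^{-s\lambda}K^{-\alpha/2}_{\alpha/2,\beta}(\lambda)\,d\lambda=E_\beta(-s^{\alpha/2})$ at unit time (correctly reading the reduced kernel, despite the stray $\e^{pt}$ in the displayed definition), is a faithful unpacking of that citation rather than a different argument.
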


The density function $\p(z;t)$ defined in (\ref{stpdf}) is the 
kernel of the following space-time fractional diffusion equation
\cite{mainardi_etal-fcaa-2001,pagnini_etal-fcaa-2016}
\be
\left\{
\begin{array}{lr}
_tD_*^\beta \, u = {_zD^\alpha_0} \, u \,, & {\rm{in}} \,\, \Omega \,, \\
\\
u(z,0)=u_0(z) \,, & {\rm{in}} \,\, \R \,,
\end{array}
\right.
\label{stfde}
\ee
with $(z,t) \in \Omega = \R \times (0,\infty)$,
$0 < \alpha < 2$, $0 < \beta < 1$,
and where $_tD_*^β$ is the Caputo time-fractional derivative of order $\beta$
defined through the Laplace transfor pair 
$$
\int_0^{+\infty} \e^{-st} \{{_tD_*^\beta} u(z,t)\} \, dt=
s^\beta \, \widetilde{u}(z,s) -\sum_{j=0}^{m-1} s^{\beta-1-j} \, u^{(j)}(z,0^+) \,,
$$
with $m-1< \beta < m$, $m \in \N$, 
and $_zD_\theta^\alpha$ is the Riesz--Feller space-fractional derivative 
of order $\alpha$, and asymmetry parameter $\theta=\min\{\alpha,2-\alpha\}$, 
defined through the Fourier transform pair
$$
\int_{-\infty}^{+\infty}
\e^{+ i \xi z}
\{{_zD_\theta^\alpha} u(z,t)\} \, dz
= -|\xi|^\alpha \, \e^{i (\rm{sign} \, \xi)\theta\pi/2} \, \widehat{u}(\xi,t) \,.
$$

\begin{corollary}
The density function $\p(z;t)$ in (\ref{stpdf})
solves the space-time fractional diffusion equation (\ref{stfde}),
whose special cases are  
the time-fractional diffusion ($\alpha=2$), 
the space-fractional diffusion ($\beta=1$),
the neutral fractional diffusion ($\alpha=\beta$)
and the classical diffusion ($\alpha=2$, $\beta = 1$). 
\end{corollary}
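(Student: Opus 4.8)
The plan is to establish the claim through the joint Fourier--Laplace transform, exploiting the algebraic structure already encoded in the defining representation of $K^0_{\alpha,\beta}$. First I would apply the Laplace transform $\int_0^{+\infty}\e^{-st}(\cdot)\,dt$ in time and the Fourier transform $\int_{-\infty}^{+\infty}\e^{+i\xi z}(\cdot)\,dz$ in space to both sides of (\ref{stfde}). Since $0<\beta<1$ forces $m=1$, the Caputo rule stated in the excerpt reduces to $\int_0^{+\infty}\e^{-st}\{{_tD_*^\beta}u\}\,dt=s^\beta\widetilde u-s^{\beta-1}u(z,0^+)$, while the Riesz--Feller rule contributes the Fourier symbol $-\Psi_\alpha^\theta(\xi)=-|\xi|^\alpha\e^{i(\mathrm{sign}\,\xi)\theta\pi/2}$. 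Taking the fundamental-solution datum $u(z,0^+)=\delta(z)$, so that $\widehat u(\xi,0^+)=1$, the transformed equation becomes the algebraic relation $s^\beta\,\widehat{\widetilde u}-s^{\beta-1}=-\Psi_\alpha^\theta(\xi)\,\widehat{\widetilde u}$, whence $\widehat{\widetilde u}(\xi,s)=s^{\beta-1}/(s^\beta+\Psi_\alpha^\theta(\xi))$.

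Next I would observe that this is precisely the integrand appearing in the definition of $K^0_{\alpha,\beta}$: inverting the Bromwich integral in $p\leftrightarrow s$ and then the Fourier integral in $\omega\leftrightarrow\xi$ reconstructs exactly the self-similar form $\p(z;t)=t^{-\beta/\alpha}K^0_{\alpha,\beta}(|z|/t^{\beta/\alpha})$ of (\ref{stpdf}). The similarity variable $|z|/t^{\beta/\alpha}$ itself follows from the invariance of $\widehat{\widetilde u}$ under the scaling $\xi\mapsto\xi\,t^{\beta/\alpha}$, $s\mapsto s/t$; this is the content of Theorem~2.1 and Corollary~2.1 of Ref.~\cite{pagnini_etal-fcaa-2016}, on which I would rely for the rigorous inversion.

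The special cases then follow by direct substitution into the symbols. For $\alpha=2$ one has $\theta=\min\{2,0\}=0$, so $\Psi_2^0(\xi)=\xi^2$ and the Riesz--Feller operator reduces to $\partial^2/\partial z^2$, whence (\ref{stfde}) collapses to the time-fractional diffusion equation; for $\beta=1$ the Caputo derivative becomes $\partial/\partial t$ and one recovers the space-fractional diffusion equation; the choice $\alpha=\beta$ yields the neutral equation in which the two fractional orders coincide; and $\alpha=2,\ \beta=1$ returns the classical heat equation.

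The main obstacle is not the formal manipulation but the analytic justification of the two inversions: one must ensure that for $0<\alpha<2$, $0<\beta<1$ and $\theta=\min\{\alpha,2-\alpha\}$ the denominator $p^\beta+\Psi_\alpha^\theta(\omega)$ has no zeros on the chosen Bromwich contour $\mathrm{Re}\,p=c$, so that the inner Laplace inversion is well defined, and that Fubini's theorem applies to interchange the $p$- and $\omega$-integrations. These points are settled by the Mellin--Barnes analysis of Refs.~\cite{mainardi_etal-fcaa-2001,pagnini_etal-fcaa-2016}, so here it suffices to match the transformed datum and invoke those results.
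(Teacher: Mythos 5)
Your proposal is correct and is essentially the paper's own route: the paper offers no independent derivation of this corollary but appeals to \cite{mainardi_etal-fcaa-2001,pagnini_etal-fcaa-2016}, where the statement is proved exactly by your Fourier--Laplace reduction to $\widehat{\widetilde u}(\xi,s)=s^{\beta-1}/\bigl(s^{\beta}+\Psi_{\alpha}^{\theta}(\xi)\bigr)$ with $\widehat u(\xi,0^{+})=1$, the double (Bromwich and Fourier) inversion against the defining representation of the kernel, the scaling relation giving the similarity variable $|z|/t^{\beta/\alpha}$, and the same symbol reductions for the four special cases. One small point of care, inherited from an ambiguity in the paper itself: the identification of the inverted symbol with the \emph{symmetric} kernel $K^{0}_{\alpha,\beta}$ of (\ref{stpdf}) requires $\theta=0$ in the Riesz--Feller symbol, consistent with the operator ${_zD^{\alpha}_{0}}$ displayed in (\ref{stfde}); carrying $\theta=\min\{\alpha,2-\alpha\}\neq 0$ through the inversion for $\alpha\neq 2$, as your write-up momentarily does, would instead produce the asymmetric kernel $K^{-\theta}_{\alpha,\beta}$.
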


The stochastic process $\Vcm^H$ defined in (\ref{ggBmMURA}) 
is built with the same constructive approach adopted 
by Mura \cite{mura-phd-2008} 
to build up the ggBm 
\cite{mura-phd-2008,mura_etal-jpa-2008,mura_etal-itsf-2009},
i.e., a Gaussian process times a non-negative random variable,
and here we call such type processes:
randomly-scaled Gaussian processes. 
The ggBm proposed by Mura is recovered from (\ref{ggBmMURA}) 
in the case $B_t^H$ is the fractional Brownian motion 
and $\Lambda \sim M_\beta(\lambda)$ where $M_\beta(\lambda)$, $0 < \beta < 1$, 
is the M-Wright/Mainardi function \cite{mainardi_etal-ijde-2010,pagnini-fcaa-2013},
i.e., $M_\beta(\lambda)=K^{-1}_{1,\beta}(\lambda)$.
Then we have the following

\begin{corollary}
Let $R(t,t)=t^{2H}$
and $f(\lambda)= M_\beta(\lambda)$,
with $0 < H < 1$ and $0 < \beta < 1$, 
then the one-point one-time density function of $\Vcm$ is  
\cite{pagnini-fcaa-2012} 
\begin{eqnarray}
\p(z;t) 
&=& \frac{1}{\sqrt{4 \pi \lambda \, t^{2H}}}\int_0^{+\infty} 
\exp\left\{-\frac{z^2}{4 \lambda \, t^{2H}}\right\}
\, M_\beta(\lambda) \, d\lambda 
\nonumber \\
&=& \frac{1}{2 \, t^H} \, M_{\beta/2}\left(\frac{|z|}{t^H}\right) \,.
\label{fbmpdf}
\end{eqnarray}
\end{corollary}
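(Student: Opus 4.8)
The plan is to split the claim into its two equalities. The first equality is immediate: it is precisely the one-point one-time density obtained in the Corollary for $R(t,t)=t^{2H}$ with general $f$, now specialised to $f(\lambda)=M_\beta(\lambda)$ and $\Omega_\lambda=(0,+\infty)$, so nothing is required beyond noting that $M_\beta$ is a legitimate probability density on $(0,+\infty)$ for $0<\beta<1$. All the analytic content lies in the second equality, which I would establish by first removing the temporal scaling and then evaluating the resulting $t$-independent integral by a Mellin transform.

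For the scaling reduction I would set $x=z/t^H$ and use $\frac{z^2}{4\lambda t^{2H}}=\frac{x^2}{4\lambda}$ together with $\frac{1}{\sqrt{4\pi\lambda t^{2H}}}=t^{-H}\,\frac{1}{\sqrt{4\pi\lambda}}$, which reduces the claim to the scale-free identity
\[
\int_0^{+\infty}\frac{1}{\sqrt{4\pi\lambda}}\,\exp\!\left\{-\frac{x^2}{4\lambda}\right\}\,M_\beta(\lambda)\,d\lambda=\frac{1}{2}\,M_{\beta/2}(|x|)\,,\qquad x\in\R\,.
\]
Restoring $x=z/t^H$ and the prefactor $t^{-H}$ afterwards then yields $\p(z;t)=\frac{1}{2t^H}\,M_{\beta/2}(|z|/t^H)$, so it suffices to prove this identity for $x>0$, the left-hand side being even in $x$.

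To prove the identity I would apply the Mellin transform $\widetilde F(s)=\int_0^{+\infty}x^{s-1}F(x)\,dx$ to its left-hand side $F$, interchange the order of integration (Tonelli's theorem applies since the integrand is nonnegative), and compute the Gaussian moment $\int_0^{+\infty}x^{s-1}e^{-x^2/(4\lambda)}\,dx=2^{s-1}\lambda^{s/2}\Gamma(s/2)$. Using the known Mellin transform of the M-Wright function, $\int_0^{+\infty}\lambda^{s-1}M_\nu(\lambda)\,d\lambda=\Gamma(s)/\Gamma(1+\nu(s-1))$, the remaining $\lambda$-integral evaluates at argument $(s+1)/2$ and gives
\[
\widetilde F(s)=\frac{2^{s-1}\,\Gamma(s/2)}{\sqrt{4\pi}}\cdot\frac{\Gamma\!\left(\tfrac{s+1}{2}\right)}{\Gamma\!\left(1+\tfrac{\beta}{2}(s-1)\right)}\,.
\]
The Legendre duplication formula $\Gamma(s/2)\,\Gamma\!\left(\tfrac{s+1}{2}\right)=2^{1-s}\sqrt{\pi}\,\Gamma(s)$ then telescopes the elementary factors, leaving $\widetilde F(s)=\tfrac{1}{2}\,\Gamma(s)/\Gamma(1+\tfrac{\beta}{2}(s-1))$, which is exactly $\tfrac12$ times the Mellin transform of $M_{\beta/2}$. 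Injectivity of the Mellin transform on a common strip of analyticity gives $F(x)=\tfrac12 M_{\beta/2}(x)$ for $x>0$, completing the argument.

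I expect the main obstacle to be the bookkeeping of the strip of convergence and the justification of the Mellin inversion rather than the algebra: one must identify a vertical strip (here $\mathrm{Re}\,s>0$) on which both elementary Mellin transforms are simultaneously valid and $\widetilde F$ is analytic, and confirm that the decay of $M_{\beta/2}$ is compatible with recovering a genuine density. These checks are standard and the nonnegativity of every integrand keeps the Tonelli step harmless; as an independent confirmation, the same identity is the subordination representation of the fundamental solution of the space-time fractional diffusion equation (\ref{stfde}) in the purely time-fractional case $\alpha=2$, $H=\beta/2$, so it can equivalently be read off from that theory.
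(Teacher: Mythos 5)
Your proposal is correct, and it supplies something the paper itself does not: the paper gives no proof of this corollary, disposing of the first equality as the specialisation $f(\lambda)=M_\beta(\lambda)$, $R(t,t)=t^{2H}$ of its general one-point density formula and of the second equality by citation to \cite{pagnini-fcaa-2012}. Your Mellin-transform verification is a clean, self-contained substitute and is essentially the standard technique used in that cited literature (there the identity is also obtained via Mellin machinery and subordination formulas for the M-Wright function). The algebra checks out: the Gaussian moment $\int_0^{+\infty}x^{s-1}e^{-x^2/(4\lambda)}dx=2^{s-1}\lambda^{s/2}\Gamma(s/2)$, the M-Wright moment formula evaluated at $(s+1)/2$ giving $\Gamma\bigl(\tfrac{s+1}{2}\bigr)/\Gamma\bigl(1+\tfrac{\beta}{2}(s-1)\bigr)$, and the duplication formula with $z=s/2$ collapsing $2^{s-1}\Gamma(s/2)\Gamma\bigl(\tfrac{s+1}{2}\bigr)/(2\sqrt{\pi})$ to $\tfrac{1}{2}\Gamma(s)$, so that $\widetilde F(s)=\tfrac12\,\Gamma(s)/\Gamma\bigl(1+\tfrac{\beta}{2}(s-1)\bigr)$ on the common strip $\mathrm{Re}\,s>0$ (both transforms converge there, since $M_\nu(0)=1/\Gamma(1-\nu)$ is finite and $M_\nu$ has stretched-exponential decay), after which Mellin uniqueness for continuous functions finishes the argument. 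Your closing cross-check is also consistent with the paper's own structure: setting $\alpha=2$ in the preceding corollary gives $K^{-1}_{1,\beta}=M_\beta$, $H=\beta/2$ and $K^0_{2,\beta}(x)=\tfrac12 M_{\beta/2}(x)$, so the two corollaries agree. One cosmetic point worth flagging: in the paper's display the prefactor $1/\sqrt{4\pi\lambda\,t^{2H}}$ is typeset outside the $\lambda$-integral even though $\lambda$ is the integration variable; you correctly read it as belonging inside the integrand, which is the only sensible interpretation and the one your scaling reduction uses.
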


The density function $\p(z;t)$ defined in (\ref{fbmpdf}) is the 
kernel of the following Erd\'elyi--Kober fractional diffusion equation
\cite{pagnini-fcaa-2012}
\be
\left\{
\begin{array}{lr}
\displaystyle
\frac{\partial u}{\partial t} = 
\frac{2H}{\beta} t^{2H-1} \, \mathcal{D}_{2H/\beta}^{\beta-1,1-\beta} 
\frac{\partial^2 u}{\partial x^2} \,, & {\rm{in}} \,\, \Omega \,, \\
\\
u(z,0)=u_0(z) \,, & {\rm{in}} \,\, \R \,,
\end{array}
\right.
\label{ekfde}
\ee
with $(z,t) \in \Omega = \R \times (0,\infty)$,
$0 < H < 1$, $0 < \beta < 1$,
where $\mathcal{D}_{\eta}^{\sigma,\mu}$ is 
the Erd\'elyi--Kober fractional derivative defined as follows.
Let $n-1 < \mu \le n$, $n \in \N$, $\eta > 0$, $\gamma \in \R$,
the Erd\'elyi--Kober fractional derivative is defined as 
$$
D^{\gamma,\mu}_\eta \, \varphi(t)=\prod_{j=1}^n 
\left(\gamma+j+\frac{1}{\eta}t\frac{d}{dt}\right)(I^{\gamma+\mu,n-\mu}_\eta \, \varphi(t)) \,,
$$
where
$$
I^{\gamma,\mu}_{\eta} \, \varphi(t)
=\frac{\eta}{\Gamma(\mu)} t^{-\eta(\mu+\gamma)}
\int_0^t \tau^{\eta(\gamma+1)-1} (t^\eta-\tau^\eta)^{\mu-1} \varphi(\tau) \, d\tau \,.
$$

\begin{corollary}
The density function $\p(z;t)$ in (\ref{fbmpdf})
solves a fractional diffusion equation
in the Erd\'elyi--Kober sense (\ref{ekfde}),
whose special cases are  
the time-fractional diffusion ($\beta = 2H$), 
the Brownian non-Gaussian motion ($2H=1$),
the Gaussian non-Brownian motion ($\beta = 1$)
and the classical diffusion ($\beta = 2H = 1$). 
\end{corollary}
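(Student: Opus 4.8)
The plan is to exploit the scale-mixture structure of the density \eqref{fbmpdf} together with the known action of the Erd\'elyi--Kober operators on power functions. First I would Fourier transform \eqref{fbmpdf} with respect to $z$. Each factor under the integral is a centred Gaussian of variance $2\lambda\, t^{2H}$, and the Laplace transform of the M-Wright function returns the one-parameter Mittag--Leffler function $E_\beta$ \cite{mainardi_etal-ijde-2010,pagnini-fcaa-2013}, so that
\be
\widehat{\p}(\xi;t)=\int_0^{+\infty} \e^{-\lambda t^{2H}\xi^2}\, M_\beta(\lambda)\, d\lambda
= E_\beta\!\left(-t^{2H}\xi^2\right) \,.
\label{eq:MLtransform}
\ee
This reduces the problem to an identity in the single variable $t$, because $\partial^2/\partial x^2$ becomes multiplication by $-\xi^2$.

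Next I would verify that \eqref{eq:MLtransform} solves the Fourier image of \eqref{ekfde}, namely
\be
\frac{\partial \widehat{\p}}{\partial t}
= -\frac{2H}{\beta}\, t^{2H-1}\, \xi^2 \,
\mathcal{D}_{2H/\beta}^{\,\beta-1,1-\beta}\, \widehat{\p} \,,
\label{eq:EKfourier}
\ee
where the Erd\'elyi--Kober derivative acts on $t$. The natural route is term by term on the series $E_\beta(-t^{2H}\xi^2)=\sum_{n\ge 0}(-\xi^2)^n t^{2Hn}/\Gamma(\beta n+1)$. Writing $t^{2Hn}=t^{\eta\nu}$ with $\eta=2H/\beta$ and $\nu=\beta n$, the operator $\mathcal{D}_{2H/\beta}^{\,\beta-1,1-\beta}$ preserves each monomial and multiplies it by a ratio of Gamma functions; applied to the $(n-1)$-th term and premultiplied by $\tfrac{2H}{\beta}t^{2H-1}\xi^2$, it reproduces the power $t^{2Hn-1}$ carried on the left-hand side by $\partial_t$. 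Matching the coefficients amounts to the identity between the Gamma ratio furnished by the operator and the factor $n\beta\,\Gamma(\beta(n-1)+1)/\Gamma(\beta n+1)$ coming from the differentiated series, which closes \eqref{eq:EKfourier}. This is precisely the computation carried out in Ref.~\cite{pagnini-fcaa-2012}, so the first assertion follows by identifying \eqref{fbmpdf} with the fundamental solution obtained there.

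The listed special cases then follow by substituting the parameter values into \eqref{ekfde} and \eqref{fbmpdf}. For $\beta=2H$ one has $\eta=2H/\beta=1$, whence the Erd\'elyi--Kober operator collapses to the corresponding Riemann--Liouville operator and \eqref{ekfde} becomes the time-fractional diffusion equation, consistently with $M_{\beta/2}=M_H$. For $\beta=1$ the mixing density degenerates, $M_1(\lambda)=\delta(\lambda-1)$, so $\Lambda$ is deterministic and \eqref{fbmpdf} reduces to the Gaussian fractional-Brownian density with scaling $t^{2H}$: the Gaussian non-Brownian case. For $2H=1$ the prefactor $t^{2H-1}$ trivialises and the scaling is linear in $t$, while the spatial profile remains the non-Gaussian M-Wright density $M_{\beta/2}$, giving the Brownian non-Gaussian case. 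The joint choice $\beta=2H=1$ combines both reductions and returns the classical heat equation with its Gaussian kernel.

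The step I expect to be the main obstacle is the term-by-term evaluation of $\mathcal{D}_{2H/\beta}^{\,\beta-1,1-\beta}$ in \eqref{eq:EKfourier}: the index bookkeeping in the Gamma-function ratios produced by the Erd\'elyi--Kober operator must cancel exactly against those arising from $\Gamma(\beta n+1)$ and from the time derivative, and the interchange of the operator with the infinite sum must be justified through the uniform convergence of the Mittag--Leffler series on compact $t$-intervals.
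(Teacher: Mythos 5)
Your proposal is correct and coincides with the paper's own justification: the paper proves this corollary only by citing Ref.~\cite{pagnini-fcaa-2012}, and your route --- Fourier transforming \eqref{fbmpdf} to $\widehat{\p}(\xi;t)=E_\beta(-t^{2H}\xi^2)$ via the Laplace transform of $M_\beta$, then verifying the Fourier image of \eqref{ekfde} term by term using $\mathcal{D}_{2H/\beta}^{\beta-1,1-\beta}\,t^{2Hn}=\bigl[\Gamma(\beta n+1)/\Gamma(\beta n+\beta)\bigr]t^{2Hn}$ for $t^{2Hn}=t^{\eta\nu}$, $\eta=2H/\beta$, $\nu=\beta n$ --- is exactly the computation of that reference, with the Gamma bookkeeping closing through $\Gamma(\beta n+1)=\beta n\,\Gamma(\beta n)$. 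Your identification of the special cases, in particular $M_1(\lambda)=\delta(\lambda-1)$ giving the Gaussian non-Brownian case for $\beta=1$ and the $\eta=1$ collapse to the Riemann--Liouville setting for $\beta=2H$, also matches the paper's claims.
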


\section{Proofs of Theorems \ref{Th:nonautonomous}, \ref{Th:ggBm1} and \ref{Th:ggBm2}}
\label{proofs}
%

\subsection{Proof of Theorem \ref{Th:nonautonomous}}
The stochastic dynamics of the process $\Vcm$ is governed by  
$$
d\Vcm = \sum_\kappa \frac{\mkappa}{M} d\Ukappa \,.
\label{MOD10}
$$
By remembering the definitions in (\ref{def}),
if we plug (\ref{OU}) into (\ref{MOD1}), 
we can derive the following Langevin-type dynamics:
\begin{eqnarray*}
d\Vcm 
&=& \frac{1}{M} \sum_\kappa \left\{ - \gammakappa \Ukappa dt 
+ \sqrt{2 \sigma_0} \, dW_t^\kappa\right\} \\
&=& - \frac{1}{M} \left\{\sum_\kappa \gammakappa \Ukappa\right\} dt 
+ \frac{\sqrt{2 \sigma_0}}{M} \sum_\kappa dW_t^\kappa \,.
\end{eqnarray*}
We introduce now the scaled process $\Ykappa = \Ukappa/\sqrt{\Dkappa}$,
then we have
\begin{eqnarray}
\sum_\kappa \gammakappa \, \Ukappa
&=& \sum_\kappa \gammakappa \sqrt{\Dkappa} \, \Ykappa  
= \sqrt{\sigma_0} \sum_\kappa \frac{\Ykappa}{\taukappa} \nonumber \\
& \stackrel{d}{=} & \sqrt{\sigma_0} \, \sqrt{\sum_\kappa 
\E \left[\left(\frac{\Ykappa}{\taukappa}\right)^2\right]} \, W_1 \nonumber \\
&=& \sqrt{\sigma_0} \, 
\frac{\sqrt{\sum_\kappa \E \left[\left(\frac{\Ykappa}{\taukappa}\right)^2\right]}}
{\sqrt{\sum_\kappa \E [(\Ykappa)^2]}} 
\sqrt{\sum_\kappa \E [(\Ykappa)^2]} \, W_1 \nonumber \\
&=& \frac{1}{\taueff} \, \sqrt{\sum_\kappa \E [(\sqrt{\sigma_0} \Ykappa)^2]} \, W_1 \nonumber \\
&=& \frac{1}{\taueff} \, \sqrt{\sum_\kappa \E [(\mkappa \Ukappa)^2]} \, W_1 \nonumber \\
& \stackrel{d}{=} & \frac{1}{\taueff} \sum_\kappa \mkappa \Ukappa = M \, \frac{\Vcm}{\taueff} \,,
\label{drift}
\end{eqnarray}
where  
\be
\frac{1}{\taueff}
=
\frac{\sqrt{\sum_\kappa \E \left[\left(\frac{\Ykappa}{\taukappa}\right)^2\right]}}
{\sqrt{\sum_\kappa \E [(\Ykappa)^2]}} 
=
\left[
\frac{\displaystyle{\int_{-\infty}^{+\infty} \int_{\Omega_\tau} 
\frac{y^2}{\tau^2} p^\kappa(y;t|\tau) q(\tau) \, d\tau dy}}
{\displaystyle{\int_{-\infty}^{+\infty} \int_{\Omega_\tau}
y^2 p^\kappa(y;t|\tau) q(\tau) \, d\tau dy}}
\right]^{1/2} 
\,,
\label{gammaeff}
\ee
which is a function of time, i.e., $\tau_{\rm{eff}}=\tau_{\rm{eff}}(t)$, 
with $p^\kappa(y;t|\tau)$ the Gaussian densities corresponding to the SDE (\ref{eqOU0}).

In the derivation of (\ref{drift}), 
the first line follows from (\ref{def}) 
and the second from the application of the rule of the sum of Gaussian variables  
reminding that the process $\Ykappa$ is Gaussian, see (\ref{eqOU0}).
The third line contains the multiplication and division by $\sqrt{\sum_\kappa \E [(\Ykappa)^2]}$
and in the fourth $\tau_{\rm{eff}}$ is introduced
and the parameter $\sigma_0$ is moved below the square root.
The fifth line follows from (\ref{def}) and, finally, 
in the sixth line the rule of the sum of Gaussian variables is used again
in the first equality and (\ref{MOD1}) is used to have the last equality.

To conclude, by setting $\sum_\kappa dW_t^\kappa = dW_t^{\rm{eff}}$,
the stochastic process $\Vcm$ satisfies the following non-autonomous SDE
$$
d\Vcm= - \frac{\Vcm}{\taueff} dt + \frac{\sqrt{2 \sigma_0}}{M} \, dW_t^{\rm{eff}} \,. 
$$

\subsection{Proof of Theorem \ref{Th:ggBm1}}
By considering the scaled process $\Ykappa=\Ukappa/\sqrt{\Dkappa}$,
the stochastic process $\Vcm$ results to be
$$
\Vcm= \frac{\sqrt{\sigma_0}}{M} \sum_{\kappa} \Ykappa \,.
$$
Let us write
\be
\overline{Y}_t^N = \frac{1}{N} \sum_{\kappa=1}^N \Ykappa.
\ee
Recall that $Y^\kappa_t \stackrel{d}{=} Y^1_t$, $\forall\, \kappa$. For $x=0$, 
\be
\mathbb{E}_x[\overline{Y}^N_t] = x \, 
\mathbb{E}_0\left[ \frac{e^{-t/\tau}}{\sqrt{\sigma}} \right] = 0 \,,
\ee
and 
\be
Var_0[\overline{Y}^N_t] = \frac{1}{N} \,
\mathbb{E}_0 [\tau (1- e^{-2t/\tau})] = \frac{1}{N} \, \E_0[(Y^1_t)^2] \,.
\ee
From the Central Limit Theorem, as $N\to \infty$
$$
\frac{\overline{Y}^N_t - \mathbb{E}_0[\overline{Y}^N_t]}{\sqrt{Var_0[\overline{Y}^N_t]}} =  \sqrt{N} \frac{\overline{Y}_t^N}{\sqrt{\E_0[(Y^1_t)^2]}} = \frac{1}{\sqrt{\E_0[(Y^1_t)^2]}} \frac{1}{\sqrt{N}} \sum_{\kappa} \Ykappa \stackrel{d}{\to} W_1 \,,
$$
that is, as $N\to \infty$, 
$$
\frac{1}{\sqrt{N}} \sum_{\kappa=1}^N \Ykappa \stackrel{d}{\to} 
\sqrt{\E_0 [(Y^1_t)^2]} \, W_1 = B^H_t \,, \quad \forall \, t > 0 \,.
$$
Finally, by multiplying both side by $\sqrt{\sigma_0}/M$,
we obtain the following equivalence in distribution
\be 
\forall\, t>0 \,, \quad 
\frac{1}{\sqrt{N}} Z_t \stackrel{d}{\to} \frac{\sqrt{\sigma_0}}{M} B^H_t \,, 
\quad N \to \infty \,.
\ee
Moreover,  by setting $\sqrt{\sigma_0}/M=\sqrt{\Lambda}$,
the stochastic process $N^{-\frac{1}{2}} \Vcm$ 
is statistically equivalent in the sense of
the Central Limit Theorem to the process $\sqrt{\Lambda} \, B_t^H$, i.e.,
$$
\frac{1}{\sqrt{N}} \Vcm \to \sqrt{\Lambda} \, B_t^H \,.
$$

\subsection{Proof of Theorem \ref{Th:ggBm2}}
The proof of Theorem \ref{Th:ggBm2} 
follows by considering the previous theorem. 
Let us consider the Fourier transforms of both densities. 
We have that
\begin{align*}
\mathbb{E}_0[e^{i \xi Z_t}] = & \int e^{i \xi z} \frac{d}{dz} \mathbb{P}_0(Z_t < z) \, dz\\
= & \int e^{i \xi z} \frac{d}{dz} \mathbb{P}_0(\sqrt{\Lambda} B^H_t < \frac{z}{\sqrt{N}}) \, dz\\
= & \int e^{i \xi \sqrt{N} z} \frac{d}{dz} \mathbb{P}_0(\sqrt{\Lambda} B^H_t < z) \, dz\\
= & \mathbb{E}_0[e^{i \xi \sqrt{N\Lambda} B^H_t}] \,.
\end{align*}

\section{Numerical simulations}
\label{simulations}
In this Section we show by numerical simulations the equivalence in distribution of 
the three processes $\Vcm$ (\ref{MOD1}), $\Vcm^*$ (\ref{langevintype}) 
and $\Vcm^H$ (\ref{ggBmMURA}) as proved 
in Theorems \ref{Th:nonautonomous} and \ref{Th:ggBm2}.
The results of the simulations are shown in Fig.~\ref{figure1} 
by plotting the density functions of the 
three processes at different times.
Moreover, the insets show the corresponding variances. 

The centre-of-mass like process $\Vcm$ is simulated
by using its definition~\eqref{MOD1} 
through the process $X_t^\kappa$, which obeys Eq.~\eqref{OU}. 
We remind, that in~\eqref{OU}
the parameters $\tau^\kappa$ and $\sigma^\kappa$, see Eq.~\eqref{def}, 
follow the corresponding distributions $q(\tau)$ and $g(\sigma)$.
In particular, in \textit{Panel a)}, these distributions are 
\begin{equation}
\label{eq:par_a}
g(\sigma) = 
-\frac{3}{8} \frac{\sigma^{-3/4}}{\Gamma(-2/3)} \exp\left[-\sigma^{-3/8}\right] \,, 
\qquad q(\tau)=\exp(-\tau) \,,
\end{equation}
where we remind that $\Gamma(-2/3) < 0$,
and in \textit{Panel b)} 
\begin{equation}
\label{eq:par_b}
g(\sigma)=\eta \frac{\sigma^{\nu - 1}}{\Gamma(\nu/\eta)} \exp(-\sigma^\eta) \,, 
\qquad q(\tau)=\frac{\alpha}{\Gamma(1/\alpha)}\frac{1}{\tau}L^{-\alpha}_\alpha\left(\tau\right) \,,
\end{equation}
where $L^{-\alpha}_\alpha\left(\tau\right)$ is the L\'evy extremal density and $\nu=0.5$, $\eta=1.3$, $\gamma_0=10^{-5}$, $\alpha=0.75$. 

The process $\Vcm^H$ is simulated by 
\be
Z_t^H=\sqrt{N\Lambda} \, B_t^H=\sqrt{N\Lambda}\sum_\kappa{Y_t^\kappa}=
\frac{\sqrt{N\sigma_0}}{M} \sum_\kappa{Y_t^\kappa} \,,
\ee 
where $Y_t^\kappa$ obeys Eq.~\eqref{YOU}.

Finally, the process $Z_t^*$ is simulated by the SDE
\[
dZ_t^*= - \frac{Z_t^*}{\taueff(t)} 
dt + \frac{\sqrt{2 N\sigma_0}}{M} \, dW_t \,,
\]
where, in analogy with previous cases,
we use the properties of the sum of Gaussian variables to set
$dW_t^{\mathrm{eff}}=\sum_\kappa dW_t^\kappa=\sqrt{N} dW_t$.
Since we already have the trajectories $Y_t^\kappa$ that 
we simulated for the process $\Vcm^H$,
for numerical convenience, the computation of $\tau_\mathrm{eff}$ 
is performed according to definition in terms of series given in Eq.~\eqref{gammaeff}. 
Moreover,
we set $\sigma_0 = 1/N$ because in the simulations we sum $N$ trajectories. 

Simulations support the equivalence in distribution among the three processes,
at least for the chosen distributions of the parameters.

\begin{figure}[!h]
\includegraphics[width=0.48\linewidth]{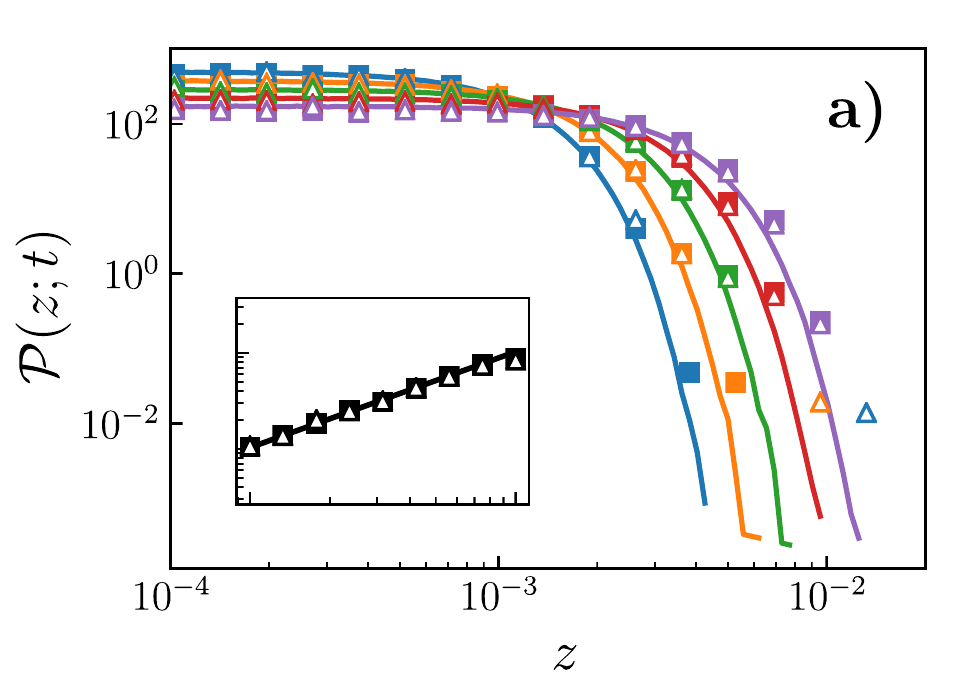}
\includegraphics[width=0.48\linewidth]{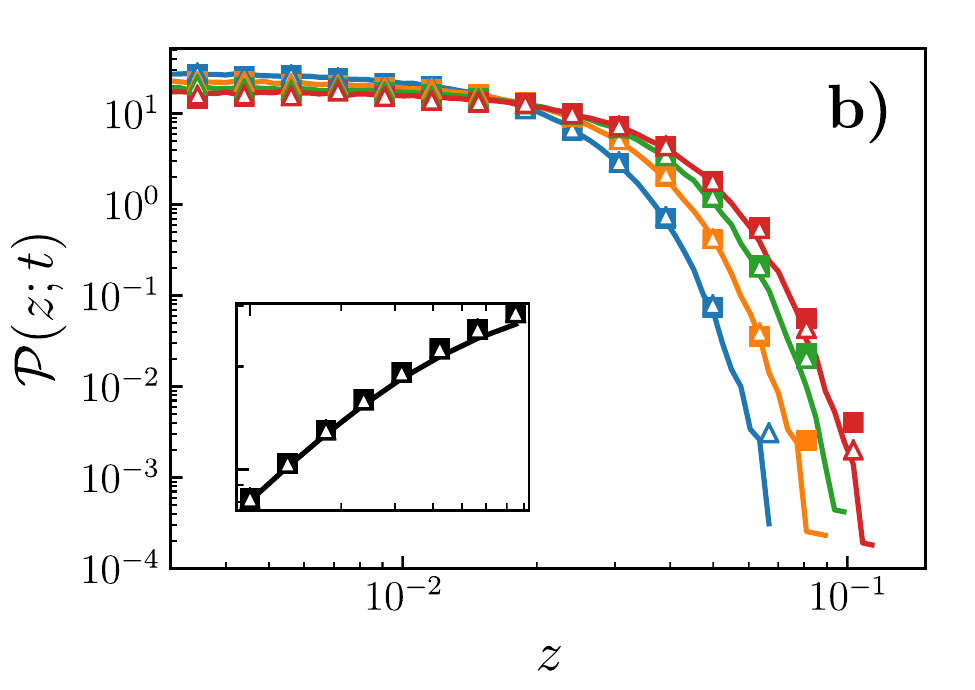}
\caption{(colour online) Equivalence in distribution of the processes $Z_t$ ($\blacksquare$), $Z_t^H$ ($\vartriangle$) and $Z_t^*$ (solid lines). Different colours represent different times.
Insets: Log-log plots of the
variances of the corresponding processes as functions of time. 
The distributions of the parameters are given with Eq.~\eqref{eq:par_a} for Panel a) and Eq.~\eqref{eq:par_b} for Panel b).
}
\label{figure1}
\end{figure} 

\section{Conclusions and perspectives for applications}
\label{conclusions}
In this paper we have considered an ensemble of OU processes characterized
by a population of relaxation times and of noise amplitudes. 
In the framework of the Brownian motion, the OU process can be viewed as a 
stochastic dynamical equation and is called Langevin equation.
In this sense, we have considered 
a heterogeneous ensemble of Brownian particles.

In particular, we have studied 
the stochastic dynamics of the centre-of-mass like process of this ensemble
and proved that this process
is equivalent in distribution to a process generated by a non-autonomous SDE
with time-dependent drift term and uncorrelated white noise, i.e.,
no-memory effects emerge in opposition 
to approaches based on the generalized Langevin equation  
or on Langevin equations with coloured noises,
and it is equivalent in distribution also to a randomly-scaled Gaussian process,
i.e., the product of a Gaussian process times a non-negative random variable.

Moreover we showed that
{\it anomalous diffusion} emerges as a consequence of the heterogeneity.
In fact, we found that
the population of the relaxation times contributes 
to the emergence of an anomalous scaling in time, as shown in (\ref{correlazione}),
and the population of the noise amplitudes contributes 
to have a non-Gaussian density function (\ref{ggBmpdfn}). 

The present study is a further step towards the interpretation
of anomalous diffusion as a consequence of a complex environment
\cite{pagnini-pa-2014,molina_etal-pre-2016,pagnini_etal-fcaa-2016} 
that is here represented by the heterogeneity of the ensemble of particles.
The improvement provided with this paper lays in the fact  
by using the OU process we establish a relation 
between the anomalous diffusion and the Brownian motion modelled by the Langevin equation,
which is a dynamical equation, so put the emergence of anomalous diffusion
on physical basis.
Then, future developments will concern the study of the motion
and the diffusion properties
of a test-particle immersed in a complex environment
embodied by such heterogeneous ensemble. 
This has a direct application for modelling anomalous diffusion 
\cite{hofling_etal-rpp-2013}
and collective motion \cite{mayor_etal-nr-2016} in biological systems.
In fact, anomalous diffusion in biological systems is largely observed in experiments
and it is due to macromolecular crowding in the interior of cells 
and in cellular membranes, because of their densely packed and heterogeneous structures
\cite{hofling_etal-rpp-2013,regner_etal-bj-2013}. 
The proposed modelling approach can be applied for modelling 
diffusion of meso-scopic test-particle in an environment also composed 
of meso-scopic particles embodying the heterogeneous and 
crowded environment where the test-particle moves.
The study of diffusion of a test-particle in a heterogeneous and crowed
environment motivates future developments of the present research.

\section*{Acknowledgments}
This research is supported by the Basque Government through the 
BERC 2014-2017 program and the BERC 2018-2021 program, 
and by the Spanish Ministry of Economy and Competitiveness MINECO 
through BCAM Severo Ochoa excellence accreditation SEV-2013-0323 
and through project MTM2016-76016-R "MIP".
VS acknowledges BCAM, Bilbao, for the financial support to her internship research period,
and SV acknowledges the University of Bologna for the financial support through
the "Marco Polo Programme" for funding her PhD research period abroad spent at BCAM, Bilbao.



\section*{References}






\end{document}